\documentclass[letterpaper, 10pt, conference]{ieeeconf}
\IEEEoverridecommandlockouts
\usepackage{cite}
\usepackage{amsmath,amssymb,amsfonts}
\usepackage{algorithmic}
\usepackage{graphicx}
\usepackage{textcomp}
\usepackage{xcolor}
\def\BibTeX{{\rm B\kern-.05em{\sc i\kern-.025em b}\kern-.08em
		T\kern-.1667em\lower.7ex\hbox{E}\kern-.125emX}}

\usepackage{cite}
\usepackage{amsmath,amssymb,amsfonts}
\usepackage{algorithmic}
\usepackage{graphicx}
\usepackage{textcomp}
\usepackage{multirow}






\usepackage{cite}

 \usepackage[subpreambles=true]{standalone}
\usepackage{import}  

\usepackage{tikz}
\usepackage{pgfplots}
\pgfplotsset{compat=newest}
\usetikzlibrary{plotmarks}
\usetikzlibrary{arrows.meta}
\usepgfplotslibrary{patchplots}

\usepackage[hidelinks]{hyperref} 
\usepackage{nicefrac}


\usepackage{makecell}

\usetikzlibrary{shapes,arrows}
\usetikzlibrary{arrows,calc}

\tikzset{
    block/.style = {draw, rectangle, 
        minimum height=1cm, 
        minimum width=2cm},
    input/.style = {coordinate,node distance=1cm},
    output/.style = {coordinate,node distance=2cm},
    arrow/.style={draw, -latex,node distance=2cm},
    pinstyle/.style = {pin edge={latex-, black,node distance=2cm}},
    sum/.style = {draw, circle, node distance=1cm}
}



\newtheorem{define}{Definition}

\newtheorem{theorem}{Theorem}
\newtheorem{lemma}{Lemma}

\newtheorem{cor}{Corollary}

\newtheorem{prop}{Proposition}


\newcommand{\Chat}{\widehat{\mathbf{C}}} 
\newcommand{\Ahat}{\widehat{\mathbf{A}}} 
\newcommand{\Bhat}{\widehat{\mathbf{B}}} 
\newcommand{\I}{\mathbf{I}} 
\newcommand{\mbf}[1]{\mathbf{#1}} 
\newcommand{\bmat}[1]{\begin{bmatrix} #1 \end{bmatrix}} 
\newcommand{\Hcal}{\mathcal{H}} 
\newcommand{\cone}{\mbox{cone}} 
\newcommand{\xhat}{\hat{\mathbf{x}}} 
\newcommand{\A}{\mathbf{A}} 
\newcommand{\B}{\mathbf{B}} 
\newcommand{\C}{\mathbf{C}} 
\newcommand{\D}{\mathbf{D}} 
\newcommand{\x}{\mathbf{x}} 
\newcommand{\bu}{\mathbf{u}} 
\newcommand{\y}{\mathbf{y}} 
\newcommand{\yhat}{\hat{\mathbf{y}}} 
\newcommand{\w}{\mathbf{w}} 
\newcommand{\bP}{\mathbf{P}} 
\newcommand{\Q}{\mathbf{Q}} 
\newcommand{\tr}{\mbox{tr}} 
\newcommand{\K}{\mbf{K}} 
\newcommand{\E}{\mbf{E}} 
\newcommand{\bS}{\mbf{S}} 
\newcommand{\R}{\mbf{R}}
\newcommand{\F}{\mbf{F}}

\newcommand{\M}{\mbf{M}}
\newcommand{\bL}{\mbf{L}}

\newcommand{\X}{\mbf{X}}
\newcommand{\Gcal}{\mathcal{G}}

\newcommand{\st}{\mbox{s.t.}}

\newcommand{\zero}{\mbf{0}}

\newcommand{\Ell}{\mathcal{L}}
\newcommand{\nhat}{\hat{n}}
\newcommand{\uhat}{\hat{\bu}}
\newcommand{\optSpace}{\hspace{3mm}}

\newcommand{\diag}{\mbox{diag}}
\newcommand{\Qbar}{\bar{\Q}}
\newcommand{\Sbar}{\bar{\bS}}
\newcommand{\Rbar}{\bar{\R}}
\newcommand{\br}{\mbf{r}}
\newcommand{\bH}{\mbf{H}}
\newcommand{\e}{\mbf{e}}
\newcommand{\rhat}{\hat{\br}}

\definecolor{mycolor1}{rgb}{0.00000,0.44700,0.74100}%
\definecolor{mycolor2}{rgb}{0.85000,0.32500,0.09800}%
\definecolor{mycolor3}{rgb}{0.92900,0.69400,0.12500}%
\definecolor{mycolor4}{rgb}{0.49400,0.18400,0.55600}%
\definecolor{mycolor5}{rgb}{0.46600,0.67400,0.18800}%
\definecolor{mycolor6}{rgb}{0.30100,0.74500,0.93300}%
\definecolor{mycolor7}{rgb}{0.63500,0.07800,0.18400}%

%
\usepackage{tikz}
\usepackage{textcomp}
\usepackage{hyperref}
\usepackage{lipsum}

\newcommand\copyrighttext{%
  \footnotesize \textcopyright 2022 IEEE. Personal use of this material is permitted.
  Permission from IEEE must be obtained for all other uses, in any current or future
  media, including reprinting/republishing this material for advertising or promotional
  purposes, creating new collective works, for resale or redistribution to servers or
  lists, or reuse of any copyrighted component of this work in other works.}
\newcommand\copyrightnotice{%
\begin{tikzpicture}[remember picture,overlay]
\node[anchor=south,yshift=10pt] at (current page.south) {\fbox{\parbox{\dimexpr\textwidth-\fboxsep-\fboxrule\relax}{\copyrighttext}}};
\end{tikzpicture}%
}

\begin{document}
	\title{Dissipative Imitation Learning for Robust Dynamic Output Feedback}
	
	\author{Amy Strong$^{1,2}$ and Ethan J. LoCicero$^{1,2}$ and Leila Bridgeman$^{2}$
		\thanks{This material is based upon work supported by the National Science Foundation Graduate Research Fellowship Program under Grant No. 1644868 and by the Alfred P. Sloan Foundation Minority Ph.D. Program.}
		\thanks{$^{1}$ These authors contributed equally to this work.}%
		\thanks{$^{2}$Amy Strong (PhD Student) Ethan J. LoCicero (PhD Candidate) and Leila Bridgeman (assistant Professor) are with the Dept. of Mechanical Eng. and Materials Science at Duke University, Durham, NC, 27708, USA (email: {\tt\small aks121@duke.edu ejl48@duke.edu; ljb48@duke.edu}, phone: 919-660-5310) }%
	}

	\maketitle
	\copyrightnotice
	
	\begin{abstract}
	    Robust imitation learning seeks to mimic expert controller behavior while ensuring stability, but current methods require accurate plant models. Here, robust imitation learning is addressed for stabilizing poorly modeled plants with linear dynamic output feedback. Open-loop input-output properties are used to characterize an uncertain plant, and the feedback matrix of the dynamic controller is learned while enforcing stability through the controller's open-loop QSR-dissipativity properties. The imitation learning method is applied to two systems with parametric uncertainty.
	\end{abstract}
	
	\section{Introduction} \label{sec:intro}
	
	When controller objectives are difficult to formulate due to complex desired behavior or inaccurate system models, imitation learning is an effective alternative. However, unconstrained imitation learning provides no stability guarantees. Recent efforts towards constrained imitation learning provide closed-loop stability or robustness guarantees for nominal linear time invariant (LTI) plant models \cite{Palan2020} \cite{Havens2021}. However, such approaches are limited in cases where accurate models are unavailable. Here, a new dissipativity-based approach provides robust stability guarantees for learned controllers using open-loop input-output (IO) properties of the plant, circumventing the need for accurate state-space models.
	
	Imitation learning bypasses direct controller design and instead uses expert demonstrations of desired system behavior to learn a system's reward function or directly learn a controller \cite{Osa2018}. The expert may be some pre-designed controller or a system's human operator. The simplest form of imitation learning, behavior cloning, is a supervised learning problem in which a mapping from state to action is learned through minimization of a loss function \cite{Osa2018}. While behavioral cloning can create a policy that imitates a stable expert, there are no inherent stability properties of the learned policy. 
	
	In situations that demand stability, robust control theory is being applied to imitation learning \cite{Yin2022,Makdah2021,Palan2020,Havens2021, Pauli2021, Chen2018, Revay2020, Donti2021}. Lyapunov stability theory was used with quadratic constraints to maximize the region of attraction of a closed loop system with neural network feedback control, while minimizing loss \cite{Yin2022}. Similarly, when fitting a policy to expert linear quadratic regulator (LQR) demonstrations, Lipschitz constraints were imposed on loss to ensure stability of feedback control\cite{Makdah2021}. Robust imitation learning has also been applied to linear feedback control policies through incorporation of prior knowledge about the expert demonstrator or the system itself. In \cite{Palan2020}, a Kalman constraint was enforced on the learning process, which assumed that the expert demonstrator was LQR optimal. In \cite{Havens2021}, robust stability was enforced by imposing a threshold on the closed-loop $\text{H}_{\infty}$ norm of a linear plant model during learning. In both cases, a stable policy was learned with a small amount of expert data. While model uncertainty was addressed in \cite{Havens2021}, it may be difficult to select an appropriate closed-loop $\text{H}_{\infty}$ norm if the plant model remains highly uncertain, which is an important use case of learning-based control.
	
	This paper explores the use of IO stability theory to guarantee stability when the LTI plant model is unreliable. In the IO approach, the plant and controller are modeled as mappings from inputs to outputs, and certain open-loop IO properties can be used to infer closed-loop stability. Importantly, these IO properties can often be shown from first principles to hold for nonlinear, time varying, and uncertain parameters, circumventing the problem of unreliable LTI models altogether. Consider the Passivity Theorem \cite{Vidyasagar1977}, which shows that two passive systems in negative feedback are stable. Many nonlinear physical systems are known to be passive for any set of parameters, so even if a passive plant is not well modeled, it must be stabilized by a passive controller \cite{Geromel1997,Brogliato2007}. This approach to stability analysis has been generalized to passivity indices \cite{Vidyasagar1977}, conic sectors \cite{Zames1966}, dissipativity \cite{Hill1977}, and further \cite{Megretski1997,Safonov1980}.
    
    There has been much recent work in designing optimal controllers that are constrained to satisfy desirable IO properties for robust stability. In particular, \cite{Geromel1997,Forbes2019} explore $\Hcal_2$-passive designs, \cite{Sivaranjani2018} applied $\Hcal_2$-conic design to power system stabilization, and \cite{Xia2020,Scorletti2001} developed $\Hcal_\infty$-conic and $\Hcal_\infty$-dissipative designs, respectively. In these designs, an IO property is imposed on the controller as a linear matrix inequality (LMI) constraint during the performance optimization. However, none of these methods have been applied to robust imitation learning. 
    
    Here, the dissipative imitation learning problem is posed for a linear dynamic output feedback controller. The behavior cloning objective is combined with an LMI constraint on the controller that enforces a desired QSR-dissipativity property. This method is similar to that of \cite{Havens2021}, in which a stable linear feedback control was learned for a closed loop system -- referred to later as the Lyapunov-constrained learner. However, in \cite{Havens2021}, an accurate system model was essential to ensuring true stability of the system, whereas here, only coarse input-output knowledge of the system is required to guarantee stability. The QSR-dissipativity framework is chosen because it encompasses passivity, passivity indices, bounded gain, and conic sectors. It can also be used to analyze networks of various IO systems \cite{Vidyasagar1981}. The proposed problem is convex and can be solved efficiently with interior point methods. The resulting controller mimics the behavior of the expert policy while guaranteeing robust closed-loop stability using open-loop IO plant analysis. 
	

	\section{Preliminaries} \label{sec:prelim}
	
	\subsection{Notation} \label{sec:notation}
	
	For a square matrix, $\bP>0$ denotes positive definite. Related properties (negative definiteness and positive/negative semi-definiteness) are denoted likewise. The identity matrix, zero matrix, and trace are $\I$, $\zero$, and $\tr(\cdot)$. Duplicate blocks in symmetric matrices are denoted $(*)$. The $\ell_2$, Frobenius, and $\mathcal{H}_2$ norms are denoted $||\cdot||_2$, $||\cdot||_F$, $||\cdot||_{\mathcal{H}_2}$. Recall $\y \in \Ell_2$ if $||\y||_2^2 = \langle \y,\, \y \rangle = \int_0^\infty \y^T(t)\y(t)dt <\infty$. Further, $\y\in \Ell_{2e}$ if its truncation to $t\in[0,\,T]$ is in $\Ell_2$ $\forall$ $T\geq 0$. The quadruple $(\A,\B,\C,\D)$ denotes the LTI state space $\dot\x = \A\x + \B\bu$, $\y = \C\x + \D\bu$, with states $\x$, inputs $\bu$, and outputs $\y$. The normal distribution with mean $\mu$ and variance $\nu$ is denoted $\mathcal{N}(\mu,\nu)$.
	
	\subsection{Dissipativity and Special Cases} \label{sec:review}
	
	Dissipativity was originally presented by Willems \cite{Willems1972}, and the special case of QSR-dissipativity was soon after defined by Hill and Moylan \cite{Hill1977} for control-affine state-space systems. The following definition for QSR-dissipativity by Vidyasagar is more general and formulated as an IO property to avoid the necessity of a state-space formulation.
	\begin{define}\label{def:dissipative}
		\textit{(Dissipativity \cite{Willems1972,Hill1977})} The operator $\Gcal:\bu\rightarrow \y$ is dissipative with respect to supply rate $w(\bu,\y)$ if for all $\bu\in\Ell_{2e}$ and all $T>0$, $\int_0^T w(t)dt\geq \beta$ for some $\beta\in\mathbb{R}$ depending only on initial conditions. If $w(\bu,\y) = \y^T\Q \y + 2\y^T\bS \bu + \bu^T\R \bu$, then the system is $(\Q,\bS,\R)$-dissipative.
	\end{define}
	Special cases of QSR-dissipativity include conic sectors \cite{Zames1966,Bridgeman2016}, passivity \cite{Brogliato2007}, and bounded gain \cite{Desoer1975}. Their relations to QSR-dissipativity are defined in \autoref{tbl:cases}. Each of these open-loop IO descriptions have an associated stability theorem through which closed-loop IO stability can be established. In $\Ell_2$ space, IO stability is defined as follows.
	\begin{define}
		\textit{(Input-Output or $\Ell_2$ Stability \cite{Vidyasagar1981})} A mapping $\Gcal:\Ell_{2e}\rightarrow\Ell_{2e}$ is $\Ell_2$ stable if any input $\bu\in\Ell_2$ maps to an output $\y\in\Ell_2$.
	\end{define}
	For example, the QSR Stability Theorem below gives conditions on two dissipative systems so that they are IO stable when connected in negative feedback.
	\begin{theorem}\label{thm:QSR}
		\textit{(QSR $\Ell_2$-Stability Theorem \cite{Vidyasagar1981}).} Consider two operators $\Gcal_i:\bu_i\rightarrow\y_i$ that are $(\Q_i,\bS_i,\R_i)$-dissipative for $i=1,2$. Let their negative feedback interconnection be defined as $\bu_1=\br_1-\y_2$ and $\bu_2=\br_2+\y_1$. Then the closed loop from $\br^T=[\br_1^T,\,\br_2^T]$ to $\y^T = [\y_1^T,\,\y_2^T]$ is $\Ell_{2}$ stable if there exists $\alpha>0$ such that 
		\begin{equation*}
			\bmat{\Q_1+\alpha\R_2 & -\bS_1+\alpha\bS_2^T \\ * & \R_1 + \alpha\Q_2} < 0.
		\end{equation*}
	\end{theorem}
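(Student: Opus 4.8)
\emph{Proof idea.} The plan is to fold the two dissipation inequalities into a single one whose leading quadratic form is precisely the matrix in the hypothesis, and then use its negative definiteness to bound the closed-loop output by the reference in $\Ell_2$. Throughout I assume the interconnection is well posed, so that for any $\br\in\Ell_{2e}$ the loop signals $\bu_i,\y_i$ exist and lie in $\Ell_{2e}$; this is what licenses applying Definition~\ref{def:dissipative} to each $\Gcal_i$.

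First I would write the $(\Q_i,\bS_i,\R_i)$-dissipativity inequality for each $\Gcal_i$, scale the $i=2$ inequality by $\alpha>0$, and add, giving $\int_0^T\big[w_1(t)+\alpha w_2(t)\big]\,dt \ge \beta_1+\alpha\beta_2$ for all $T>0$. Substituting the feedback relations $\bu_1=\br_1-\y_2$ and $\bu_2=\br_2+\y_1$ into $w_1=\y_1^T\Q_1\y_1+2\y_1^T\bS_1\bu_1+\bu_1^T\R_1\bu_1$ and $w_2=\y_2^T\Q_2\y_2+2\y_2^T\bS_2\bu_2+\bu_2^T\R_2\bu_2$ and collecting terms, the part quadratic in $(\y_1,\y_2)$ is exactly
\[
\bmat{\y_1\\\y_2}^{\!T}\M\bmat{\y_1\\\y_2},\qquad
\M:=\bmat{\Q_1+\alpha\R_2 & -\bS_1+\alpha\bS_2^T \\ * & \R_1+\alpha\Q_2},
\]
while the remainder is bilinear in $(\y,\br)$ plus a term quadratic in $\br$ alone; i.e. the inequality reads $\int_0^T\!\big([\y]^T\M[\y]+2[\y]^T\bL[\br]+[\br]^T\mbf{N}[\br]\big)dt\ge\beta_1+\alpha\beta_2$ with $\bL$ built from $\bS_i,\R_i$ and $\mbf{N}=\diag(\R_1,\alpha\R_2)$. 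This cancellation of the $\bu$-dependence into $\M$ is the reason the LMI has its stated off-diagonal block $-\bS_1+\alpha\bS_2^T$; it is just bookkeeping, but it is the step where a sign error in the feedback convention would propagate.

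Next, using $\M<0$ choose $\epsilon>0$ with $-\M\ge\epsilon\I$, move the $\y$-quadratic term to the left, and truncate to $[0,T]$ to obtain $\epsilon\|\y_T\|_2^2\le 2\langle\y_T,\bL\br_T\rangle+\langle\br_T,\mbf{N}\br_T\rangle-\beta_1-\alpha\beta_2$. Applying Young's inequality, $2\langle\y_T,\bL\br_T\rangle\le\tfrac{\epsilon}{2}\|\y_T\|_2^2+\tfrac{2}{\epsilon}\|\bL\|^2\|\br_T\|_2^2$, and absorbing $\tfrac{\epsilon}{2}\|\y_T\|_2^2$ on the left yields $\tfrac{\epsilon}{2}\|\y_T\|_2^2\le c_1\|\br_T\|_2^2+c_0$, where $c_1$ depends only on $\epsilon$, $\bL$, $\mbf{N}$ and $c_0=-(\beta_1+\alpha\beta_2)$ depends only on the initial conditions. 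Since $\br\in\Ell_2$ gives $\|\br_T\|_2\le\|\br\|_2<\infty$ uniformly in $T$, the bound $\|\y_T\|_2^2\le\tfrac{2}{\epsilon}(c_1\|\br\|_2^2+c_0)$ holds for every $T$; letting $T\to\infty$ shows $\y\in\Ell_2$, so the map $\br\mapsto\y$ is $\Ell_2$ stable.

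I expect the main obstacle to be not any single computation but the care around well-posedness and the accounting of constants: one must justify that the truncated loop signals lie in $\Ell_{2e}$ before invoking dissipativity, and must ensure that every initial-condition-dependent quantity is collected into $c_0$ (and nothing else is) so that the resulting $\Ell_2$ bound is genuinely uniform in $T$.
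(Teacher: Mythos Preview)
Your argument is the standard one and is correct: add the two dissipation inequalities with weight $\alpha$, substitute the interconnection, identify the $\y$-quadratic form with $\M$, and use $\M<0$ together with Young's inequality to absorb the bilinear term and obtain a uniform-in-$T$ bound on $\|\y_T\|_2$. The bookkeeping you flag (the off-diagonal block $-\bS_1+\alpha\bS_2^T$ and $\mbf{N}=\diag(\R_1,\alpha\R_2)$) checks out.

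However, there is nothing to compare against: the paper does not prove \autoref{thm:QSR}. It is stated in the preliminaries with a citation to Vidyasagar~\cite{Vidyasagar1981} and used as a black box thereafter. Your write-up is essentially the classical proof found in that reference, so it is consistent with what the paper implicitly relies on, but the paper itself contributes no proof of this statement.
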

	
	The (Extended) Conic Sector Theorem \cite{Zames1966,Bridgeman2016} and the well known Passivity Theorem \cite{Vidyasagar1977} and Small Gain Theorem \cite{Desoer1975} provide IO stability guarantees from similar open-loop IO properties. Briefly, two passive systems in negative feedback are IO stable, two gain-bounded systems in negative feedback are IO stable if their gains multiply to less than one, and the bounds for conic systems have more tedious but similarly simple relationships that guarantee stability.
	
	For LTI systems, variations on the KYP Lemma provide matrix inequality conditions for the satisfaction of IO properties. Two important cases are given below.
	
	\begin{lemma}\label{lem:KYPpass}
		\textit{(Passivity KYP Lemma \cite{Brogliato2007})} Let the system $\Gcal:(\A,\B,\C,\D)$ be controllable and observable. The system is passive if and only if there exists $\bP>0$ such that
		\begin{equation} \label{eqn:KYPpass}
			\bmat{\bP\A+\A^T\bP & \bP\B-\C^T \\ * & -\D-\D^T} \leq 0.
		\end{equation}
	\end{lemma}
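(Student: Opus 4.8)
The plan is to prove the two directions of the equivalence separately. Here passivity of $\Gcal$ means $(\zero,\I,\zero)$-dissipativity in the sense of \autoref{def:dissipative} (see \autoref{tbl:cases}), i.e. $\int_0^T\y^T(t)\bu(t)\,dt\ge\beta$ for all $\bu\in\Ell_{2e}$ and all $T>0$, with $\beta$ depending only on $\x(0)$. Sufficiency will be a short storage-function argument that uses no minimality assumption; necessity will rely on the available-storage construction, and it is there that controllability and observability enter.

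\textbf{Sufficiency.} Assume $\bP>0$ solves \eqref{eqn:KYPpass}. I would take $V(\x)=\tfrac12\x^T\bP\x\ge0$ as a candidate storage function, differentiate $V$ along a trajectory, and substitute $\dot\x=\A\x+\B\bu$ and $\y=\C\x+\D\bu$; collecting terms gives the algebraic identity
\begin{equation*}
\y^T\bu-\dot V \;=\; -\tfrac12\bmat{\x\\\bu}^T\bmat{\bP\A+\A^T\bP & \bP\B-\C^T\\ * & -\D-\D^T}\bmat{\x\\\bu}\;\ge\;0,
\end{equation*}
the inequality being exactly \eqref{eqn:KYPpass}. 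Integrating $\dot V\le\y^T\bu$ over $[0,T]$ yields $\int_0^T\y^T(t)\bu(t)\,dt\ge V(\x(T))-V(\x(0))\ge-V(\x(0))=:\beta$, which depends only on $\x(0)$, so $\Gcal$ is passive.

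\textbf{Necessity.} Assume $\Gcal$ is passive and define the available storage $V_a(\x_0)=\sup\bigl(-\int_0^T\y^T(t)\bu(t)\,dt\bigr)$, the supremum over $\bu\in\Ell_{2e}$, $T\ge0$, and trajectories with $\x(0)=\x_0$. Passivity gives $0\le V_a(\x_0)\le-\beta(\x_0)<\infty$, the lower bound from $T=0$, and a dynamic-programming argument shows $V_a$ satisfies the dissipation inequality $V_a(\x(t_1))-V_a(\x(t_0))\le\int_{t_0}^{t_1}\y^T\bu\,dt$ along all trajectories. Since the dynamics are linear and the supply rate is quadratic, the optimization defining $V_a$ is a linear--quadratic problem whose value function is quadratic; controllability makes this form well defined on the whole state space, so $V_a(\x)=\tfrac12\x^T\bP\x$ with $\bP=\bP^T\ge0$. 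Differentiating the dissipation inequality along trajectories with $\x(0)$ and $\bu(0)$ free and regrouping exactly as in the sufficiency identity gives \eqref{eqn:KYPpass} for this $\bP$. It remains to strengthen $\bP\ge0$ to $\bP>0$: if $\bP v=0$ with $v\ne0$, the $(1,1)$ block of \eqref{eqn:KYPpass} forces $(\bP\A+\A^T\bP)v=0$, hence $\bP\A v=0$, so $\ker\bP$ is $\A$-invariant; negative semidefiniteness of the full block then forces $\C v=0$, so $v$ lies in the unobservable subspace, contradicting observability unless $\bP>0$.

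The hard part is the necessity direction, and specifically the step of showing that the available storage is not merely \emph{some} storage function but exactly a quadratic form $\tfrac12\x^T\bP\x$ with $\bP\ge0$, together with enough regularity along trajectories to differentiate the dissipation inequality and extract the pointwise LMI. This is precisely where linearity of $(\A,\B,\C,\D)$ and minimality are indispensable: controllability guarantees $V_a$ is a genuine quadratic form over all of the state space, and observability rules out a singular $\bP$. The sufficiency argument and the closing kernel computation are routine linear algebra by comparison.
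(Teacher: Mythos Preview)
The paper does not prove \autoref{lem:KYPpass}; it is quoted as a known result from \cite{Brogliato2007} with no accompanying argument, so there is nothing in the paper to compare your proof against.

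That said, your proposal is the standard Willems storage-function route and is essentially correct. The sufficiency computation is exact, and for necessity the available-storage construction, its quadraticity for LTI data, and the kernel argument upgrading $\bP\ge0$ to $\bP>0$ via observability are all sound. One small inaccuracy: you credit controllability with making $V_a$ ``well defined on the whole state space,'' but finiteness of $V_a(\x_0)$ already follows from the passivity bound $-\beta(\x_0)$, and quadraticity follows from linearity of the dynamics alone (homogeneity plus the parallelogram identity). Controllability is what makes the \emph{required supply} $V_r$ finite in Willems' framework; in your argument the genuine use of minimality is the observability step at the end.
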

	
	\begin{lemma}\label{lem:KYPdiss}
		\textit{(Dissipativity KYP Lemma \cite{Gupta1996})} A square stable LTI system $\Gcal:\Ell_{2e}\rightarrow\Ell_{2e}$ with minimal state space realization $(\A,\B,\C,\D)$ is QSR-dissipative if and only if there exists $\bP>0$ such that 
		\begin{equation} \label{eqn:KYPdiss}
			\bmat{\bP\A{+}\A^T\bP{-}\C^T\Q\C & \bP\B-\C^T(\Q\D+\bS) \\ * & -\R{-}\bS^T\D{-}\D^T\bS{-}\D^T\Q\D} \leq 0.
		\end{equation}
	\end{lemma}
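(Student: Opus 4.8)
The plan is to prove the two implications separately, following the classical Willems storage-function argument specialised to the quadratic (QSR) supply rate $w(\bu,\y)=\y^T\Q\y+2\y^T\bS\bu+\bu^T\R\bu$ and to LTI dynamics. For the \emph{sufficiency} (``if'') direction, assume such a $\bP>0$ exists and take the candidate storage function $V(\x)=\x^T\bP\x\geq 0$. First I would differentiate $V$ along $\dot\x=\A\x+\B\bu$ to get $\dot V=\x^T(\A^T\bP+\bP\A)\x+2\x^T\bP\B\bu$. Then, substituting $\y=\C\x+\D\bu$ into $w$ and collecting terms in the stacked vector $[\x^T\ \bu^T]^T$, a short computation shows that $\dot V-w(\bu,\y)$ equals exactly the quadratic form generated by the left-hand side matrix of \eqref{eqn:KYPdiss}. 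Hence \eqref{eqn:KYPdiss} yields $\dot V\leq w$ pointwise in time; integrating over $[0,T]$ gives $\int_0^T w(t)\,dt\geq V(\x(T))-V(\x(0))\geq-\x(0)^T\bP\x(0)$, so \autoref{def:dissipative} holds with $\beta=-\x(0)^T\bP\x(0)$, which depends only on the initial state.

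For the \emph{necessity} (``only if'') direction, assume $\Gcal$ is $(\Q,\bS,\R)$-dissipative. The key object is the available storage
\[
V_a(\x_0)\;=\;\sup_{T\geq 0,\ \bu\in\Ell_{2e}}\Big(-\!\int_0^T w(t)\,dt\Big),
\]
with the supremum over trajectories starting at $\x(0)=\x_0$. Dissipativity bounds this supremum above by $-\beta(\x_0)$, so $V_a$ is finite, and $V_a\geq 0$ (take $T=0$). Because the dynamics are linear and $w$ is quadratic, $V_a$ is a quadratic form $V_a(\x)=\x^T\bP\x$ with $\bP=\bP^T$, and minimality (controllability and observability) of $(\A,\B,\C,\D)$ upgrades this to $\bP>0$. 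Since $V_a$ satisfies the dissipation inequality $\x(t_1)^T\bP\x(t_1)-\x(t_0)^T\bP\x(t_0)\leq\int_{t_0}^{t_1}w(t)\,dt$ along every trajectory, differentiating and reversing the algebra of the sufficiency step — exploiting that $\x(0)$ and $\bu(0)$ may be taken arbitrary — recovers \eqref{eqn:KYPdiss}.

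The sufficiency direction is routine matrix algebra. The substantive part is necessity, and there are three places where the hypotheses must be used with care: showing $V_a$ is \emph{finite} (this is exactly the dissipation hypothesis), showing it is a genuine \emph{quadratic} function of the state so that a single $\bP$ captures it (linearity of the dynamics together with the quadratic supply), and showing minimality forces $\bP>0$ rather than merely $\bP\geq 0$ (zero-state observability, plus controllability so that the available storage stays finite along reachable trajectories). I expect pinning down the quadratic-and-positive-definite structure of $V_a$ to be the main obstacle. An alternative to the storage-function bookkeeping is the frequency-domain route: for a stable LTI system, $(\Q,\bS,\R)$-dissipativity is equivalent via Parseval to the frequency-domain inequality
\[
\bmat{G(j\omega)\\ \I}^{*}\bmat{\Q & \bS \\ \bS^T & \R}\bmat{G(j\omega)\\ \I}\;\geq\;0,\qquad \forall\,\omega\in\mathbb{R}\cup\{\infty\},
\]
where $G(s)=\C(s\I-\A)^{-1}\B+\D$, and the classical KYP (positive-real-type) lemma then converts this inequality, together with minimality, into the LMI \eqref{eqn:KYPdiss}.
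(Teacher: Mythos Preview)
The paper does not supply a proof of this lemma at all: it is quoted as a known result with a citation to \cite{Gupta1996}, and the surrounding text immediately moves on to use it. So there is nothing in the paper to compare your argument against.

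On its own merits your outline is the standard Willems/Hill--Moylan route and is essentially correct. The sufficiency direction is exactly as you say: with $V(\x)=\x^T\bP\x$, the identity
\[
\dot V(\x,\bu)-w(\bu,\y)\;=\;\bmat{\x\\\bu}^T\bmat{\bP\A+\A^T\bP-\C^T\Q\C & \bP\B-\C^T(\Q\D+\bS)\\ * & -\R-\bS^T\D-\D^T\bS-\D^T\Q\D}\bmat{\x\\\bu}
\]
is a two–line computation, and integrating gives the dissipation inequality with $\beta=-\x(0)^T\bP\x(0)$. For necessity, your available-storage construction is the right idea; the one place to be careful is the upgrade from $\bP\geq 0$ to $\bP>0$. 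For a general $(\Q,\bS,\R)$ supply, observability alone does not obviously force strict positivity of $V_a$ the way it does in the passivity case (where $w\equiv 0$ along the zero-input trajectory forces $\y\equiv 0$). The clean argument uses both minimality and the assumed stability of $\A$: finiteness of the \emph{required supply} $V_r$ (via controllability) together with $V_a\leq V_r$ and the Hill--Moylan factorisation then yields a $\bP>0$ solving \eqref{eqn:KYPdiss}. Your alternative frequency-domain route, invoking Parseval and the classical KYP lemma for the stable minimal realization, sidesteps this bookkeeping entirely and is arguably the shortest rigorous path here given that stability of $\A$ is already in the hypotheses.
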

	
	Further, the Network QSR Lemma provides a means of combining local subsystem QSR properties into global QSR properties for an interconnected system.
	
	\begin{lemma}\label{lem:netQSR}
		\textit{(Network QSR Lemma \cite{Vidyasagar1981})}
		Suppose $\Gcal:\br\rightarrow\y$ is composed of $N$ subsystems $\Gcal_i:\bu_i\rightarrow\y_i$, which are $(\Q_i,\bS_i,\R_i)$-dissipative, and let their interconnections be defined by $\bu_i = \br_i - \sum_{i=1}^N \bH_{ij}\y_j$ for disturbance $\br=[\br_1,\dots,\br_N]$, output $\y=[\y_1,\dots,\y_N]$, and interconnection matrix $\bH$. Define $\Q=\diag(\Q_1,\dots,\Q_N)$, $\bS=\diag(\bS_1,\dots,\bS_N)$, and $\R=\diag(\R_1,\dots,\R_N)$. Then $\Gcal$ is $(\Qbar,\Sbar,\Rbar)$-dissipative, where
		$\Qbar = \Q+\bH^T\R\bH - \bS\bH - \bH^T\bS^T, \; \Rbar = \R, \;\Sbar = \bS-\bH^T\R$.
	\end{lemma}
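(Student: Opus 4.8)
The plan is to reduce the interconnected supply rate to a single quadratic form in the external signals by summing the subsystem dissipation inequalities and substituting the interconnection constraints. First I would stack the signals, writing $\bu = [\bu_1^T,\dots,\bu_N^T]^T$ and $\y=[\y_1^T,\dots,\y_N^T]^T$, so that the block-diagonal matrices $\Q,\bS,\R$ give $\sum_{i=1}^N w_i(\bu_i,\y_i) = \y^T\Q\y + 2\y^T\bS\bu + \bu^T\R\bu$, and the interconnection laws $\bu_i = \br_i - \sum_j \bH_{ij}\y_j$ collapse to $\bu = \br - \bH\y$ with $\bH=[\bH_{ij}]$. Since each $\Gcal_i$ is $(\Q_i,\bS_i,\R_i)$-dissipative, there are constants $\beta_i$ depending only on the initial conditions with $\int_0^T w_i(t)\,dt \geq \beta_i$ for all $T>0$ and all $\bu_i\in\Ell_{2e}$; adding these over $i$ gives $\int_0^T\!\big(\y^T\Q\y+2\y^T\bS\bu+\bu^T\R\bu\big)\,dt\geq\beta$ with $\beta=\sum_i\beta_i$.

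The second step is the algebraic substitution $\bu=\br-\bH\y$ into the stacked supply rate. Expanding and using the symmetry of $\Q$ and $\R$, the terms regroup into a quadratic in $\y$ with kernel $\Q-\bS\bH-\bH^T\bS^T+\bH^T\R\bH$, a cross term $2\y^T(\bS-\bH^T\R)\br$, and a quadratic in $\br$ with kernel $\R$ — that is, exactly $\y^T\Qbar\y+2\y^T\Sbar\br+\br^T\Rbar\br$. The only mild care needed is to symmetrize the indefinite term $-2\y^T\bS\bH\y=-\y^T(\bS\bH+\bH^T\bS^T)\y$ before reading off $\Qbar$. Combining this identity with the summed inequality yields $\int_0^T \bar w(\br,\y)\,dt\geq\beta$ for the supply rate $\bar w$ associated with $(\Qbar,\Sbar,\Rbar)$, which is the claim, and the constant $\beta$ still depends only on the subsystems' initial conditions as required by \autoref{def:dissipative}.

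The main obstacle is not the algebra but the bookkeeping around admissibility: one must argue that whenever the external input $\br\in\Ell_{2e}$ the internal inputs $\bu_i=\br_i-\sum_j\bH_{ij}\y_j$ indeed lie in $\Ell_{2e}$, so that the subsystem dissipation inequalities may legitimately be invoked. This is where a well-posedness hypothesis on the interconnection (causality, with the loop map well defined on $\Ell_{2e}$) enters, and it should be stated alongside the lemma or taken as a standing assumption. Once that is granted, the remaining work is purely the substitution and regrouping sketched above, so the proof is short.
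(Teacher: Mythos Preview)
The paper does not supply its own proof of this lemma; it is quoted as a cited result from \cite{Vidyasagar1981} without argument. Your proposal is the standard derivation and is correct: summing the subsystem dissipation inequalities, stacking to get the block-diagonal $(\Q,\bS,\R)$ supply rate, substituting $\bu=\br-\bH\y$, and symmetrizing $-2\y^T\bS\bH\y$ before reading off $(\Qbar,\Sbar,\Rbar)$ reproduces exactly the stated formulas. Your observation that a well-posedness hypothesis on the interconnection is implicitly needed (so that $\br\in\Ell_{2e}$ forces $\bu_i\in\Ell_{2e}$ and the subsystem inequalities apply) is apt and is indeed a standing assumption in Vidyasagar's treatment.
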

	
	\section{The Case for Dissipativity}
	
	There are two primary reasons for pursuing learning-based design. First, there may be no easily defined objective function that adequately characterizes the desired performance. Second, the plant may be very poorly modeled, in which case traditional objectives like $\Hcal_2$-norm minimization or pole placement would not necessarily yield desirable behavior on the true system. In either case, cloning the behavior of an expert policy circumvents the challenge of posing a useful controller objective. Recent work in achieving stability and robustness guarantees for imitation learning has primarily considered the first perspective, where the plant model can be trusted \cite{Havens2021}. In this case, stability and robustness guarantees on the controller are formulated as closed-loop conditions assuming a nominal LTI plant. The $\Hcal_\infty$ robustness proposed in \cite{Havens2021} may be used to compensate for uncertainty in the model, but determining what bound to use is an open question, especially for complex systems.
	
	When the plant is poorly understood due to parametric uncertainty, unmodeled nonlinearity, delays, etc, closed-loop conditions will not yield reliable stability guarantees. In this case, open-loop conditions based on coarse knowledge of plant IO properties (often derived from first principles) can be used to achieve robust stability guarantees without reliance on accurate state-space models. For example, it is well known that many systems -- such as RLC circuits, PID controllers, and flexible robotic manipulators -- are passive for all possible parameters \cite{Brogliato2007}. This fact follows from physical laws even for nonlinear and time-varying cases. By the Passivity Theorem \cite{Vidyasagar1977}, any passive controller must stabilize such systems. Thus, if an open-loop passivity constraint is imposed on the controller, closed-loop stability is guaranteed without resorting to a deficient LTI model. This stability is guaranteed despite additional noise or inconsistencies in training data, which is especially relevant if a human expert is mimicked. Moreover, while training and test data distributions may vary \cite{Osa2018}, passivity ensures closed-loop stability \cite{Brogliato2007}.

    More generally, dissipativity can be used to convert IO plant information to closed-loop stability guarantees using open-loop controller conditions. At its most general, it is difficult to infer dissipativity from first principles. However, a valuable application of dissipativity is to combine incongruous information from different subsystems. For example, if one subsystem is poorly modeled but known to be passive, another is poorly modeled but has bounded gain, a third is well modeled and known to lie in a particular conic sector, and so on, these conditions can be combined into an overall QSR-dissipativity property according to \autoref{lem:netQSR}. Recent work has also established data-driven methods for identifying QSR properties when models and analytic results are not available \cite{Koch2021AUT}. Thus QSR-dissipativity provides a unified approach to robust stabilization with IO methods and is well suited to the motivations of learning-based control for poorly modeled plants. The next section formalizes the dissipativity-constrained behavior cloning problem.

	\section{Problem Statement} \label{sec:problemstatement}
	
	Consider a plant $\Gcal:\bu\rightarrow\y$ where $\bu,\,\y\in\Ell_{2e}$ are inputs and outputs of dimension $m$ and $p$, respectively. Consider also the LTI control law $\mathcal{C}:(\Ahat,\Bhat,\Chat,\zero)$ with states $\xhat\in\mathbb{R}^{\nhat}$, inputs $\uhat\in\mathbb{R}^p$, and outputs $\yhat\in\mathbb{R}^m$. Let the controller and plant be in negative feedback defined by $\bu=\rhat-\yhat$ and $\uhat = \br +\y$, where $\rhat\in\mathbb{R}^m$ and $\br\in\mathbb{R}^p$ are noise. This dynamic output-feedback control law is composed of observer $(\Ahat,\Bhat)$ and feedback matrix $\Chat$.
	
	Now suppose we have an open-loop stable observer $(\Ahat,\Bhat)$ and an expert policy demonstration defined by a sequence of state-estimate/control-action pairs $\{\xhat_k,\bu_k\}_{k=0}^N$. Importantly, the expert policy $\bu_k$ may have access to better information than the concurrent state estimate $\xhat_k$, but the designed controller will not. The objective is to design the feedback matrix $\Chat$ so that the controller $(\Ahat,\Bhat,\Chat)$ closely imitates the behavior of the expert and satisfies a prescribed QSR-dissipativity condition despite a potentially poorly designed observer. This condition in turn ensures closed-loop stability through an associated IO stability theorem, like \autoref{thm:QSR}. The dissipativity-constrained behavior cloning problem is to minimize $\frac{1}{N}\sum_{k=0}^N l(\Chat\xhat_k,\bu_k) + \eta r(\Chat)$ over $\Chat$ such that $\mathcal{C}:(\Ahat,\Bhat,\Chat,\zero)$ is $(\Q,\bS,\R)$-dissipative, where $l$ is a loss function that empirically measures how well the learned controller mimics the expert policy on the state-estimate data, $r$ is a regularization function to prevent overfitting, and parameter $\eta\in\mathbb{R}_+$ weights the regularization term.
	
	If the plant is poorly modeled, the observer is destined to estimate the states relatively poorly. This is exasperated once the feedback $\Chat$ is added, because the separation principle is lost when $\Ahat$ is fixed \textit{a priori}. However, since the feedback matrix is trained to match the mapping between the (generally bad) state-estimates and the expert control action (which itself does not necessarily rely on the designed observer), the controller can still achieve good performance. Further, as will be shown in the next section, the training problem is always feasible if the observer is open-loop stable.
	
	In keeping with \cite{Havens2021}, a simple and effective choice of loss and regularization are the sum-of-squares, $l(\Chat\xhat_k,\bu_k) = ||\Chat\xhat_k-\bu_k||_2^2$, and the squared Frobenius norm, $r(\Chat) = ||\Chat||_F^2 = \tr(\Chat\Chat^T)$, respectively. \autoref{lem:KYPdiss} can be applied to convert dissipativity condition into a matrix inequality constraint. These choices result in the new problem
	\begin{subequations} \label{eqn:nonconvex_problem}
		\begin{align}
			\min_{\bP>0,\Chat} & \optSpace \frac{1}{N}\sum_{k=0}^N ||\Chat\xhat_k-\bu_k||_2^2 + \eta \tr(\Chat\Chat^T)  \\
			\st & \optSpace \bmat{\bP\Ahat+\Ahat^T\bP-\Chat^T\Q\Chat & \bP\Bhat-\Chat^T\bS \\ * & -\R } \leq 0. \label{eqn:nonconvex_constraint}
		\end{align}
	\end{subequations}
	Equation~\ref{eqn:nonconvex_problem} now has a convex objective, but Constraint~\ref{eqn:nonconvex_constraint} is in general nonlinear. However, for several important special cases, this constraint can be non-conservatively re-posed as an LMI. This is addressed in the next section.

	\section{Main Results} \label{sec:mainresults}
	
	In this section, a new LMI constraint is proposed for imposing QSR-dissipativity with $\Chat$ as the design variable. This new result, established in \autoref{lem:QSRc}, is non-conservative but requires $\Q<0$. This includes bounded gain and interior conic bounds as special cases, while for the special case of passivity, the original constraint is already linear. These special cases are tabulated in terms of their equivalent QSR-dissipativity characterization in \autoref{tbl:cases}. Together, the results of Lemma~\ref{lem:KYPdiss} and Corollary~\ref{lem:QSRc} provide the first unified framework for imposing any interior conic bounds and more generally any QSR property with $\Q<0$ when designing the feedback matrix $\Chat$ for a known observer $(\Ahat,\Bhat)$.
	
	\begin{cor}\label{lem:QSRc}
		Let the LTI system $\Gcal:(\A,\B,\C,\zero)$ be controllable and observable. Then $\Gcal$ is $(\Q,\bS,\R)$-dissipative with $\Q<0$ if and only if there exists $\bP>0$ such that 
		\begin{equation} \label{eqn:QSRc}
			\bmat{\bP\A+\A^T\bP & \bP\B-\C^T\bS & \C^T \\ * & -\R & \zero \\ * & * & \Q^{-1}}\leq 0.
		\end{equation}
	\end{cor}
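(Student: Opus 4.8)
The plan is to derive \eqref{eqn:QSRc} directly from the Dissipativity KYP Lemma (\autoref{lem:KYPdiss}) by a Schur complement argument. Since $\D = \zero$ here, \eqref{eqn:KYPdiss} specializes to
\begin{equation*}
	\bmat{\bP\A+\A^T\bP-\C^T\Q\C & \bP\B-\C^T\bS \\ * & -\R} \leq 0.
\end{equation*}
The key observation is that the offending nonlinear (in $\C$) term $-\C^T\Q\C$ appearing in the $(1,1)$ block is exactly a Schur complement term: because $\Q < 0$, the matrix $\Q^{-1}$ exists and is negative definite, and $-\C^T\Q\C = \C^T(-\Q^{-1})^{-1}\C$ can be ``pulled out'' into a third block row/column. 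Concretely, I would invoke the standard generalized Schur complement: for a symmetric block matrix $\bmat{\M & \bL \\ * & \N}$ with $\N < 0$, the matrix is $\leq 0$ if and only if $\M - \bL\N^{-1}\bL^T \leq 0$. Applying this with the augmented matrix on the left of \eqref{eqn:QSRc}, the $(3,3)$ block $\Q^{-1}$ is negative definite, so \eqref{eqn:QSRc} holds iff
\begin{equation*}
	\bmat{\bP\A+\A^T\bP & \bP\B-\C^T\bS \\ * & -\R} - \bmat{\C^T \\ \zero}\Q^{-1}\bmat{\C^T \\ \zero}^T \leq 0,
\end{equation*}
and since $\bmat{\C^T \\ \zero}\Q\,^{-1}\bmat{\C & \zero} = \bmat{\C^T\Q^{-1}\C & \zero \\ \zero & \zero}$, the displayed inequality collapses term-by-term to the $\D=\zero$ specialization of \eqref{eqn:KYPdiss}, with $-\C^T\Q\C$ replaced by $-\C^T(\Q^{-1})^{-1}\C = -\C^T\Q\C$. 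Wait — one must be careful with the sign: $\bL\N^{-1}\bL^T$ with $\N = \Q^{-1}$ gives $\C^T\Q\C$, and we are subtracting it, yielding $-\C^T\Q\C$, which matches. So the two inequalities are equivalent.

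First I would state the $\D = \zero$ reduction of \autoref{lem:KYPdiss} and note that controllability and observability (assumed here) imply minimality, so the lemma applies and ``$(\Q,\bS,\R)$-dissipative'' is equivalent to the existence of $\bP > 0$ satisfying that reduced inequality. Second, I would record the generalized Schur complement fact with the precise hypothesis $\Q < 0$ (hence $\Q^{-1} < 0$), which is where the blanket assumption $\Q < 0$ in the corollary statement is used and cannot be dropped. Third, I would carry out the block computation above to show \eqref{eqn:QSRc} $\Leftrightarrow$ reduced \eqref{eqn:KYPdiss}, chaining the two equivalences to conclude. The logical structure is a short ``iff–iff'' chain with no estimates.

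I do not anticipate a genuine obstacle; the only things needing care are (i) getting the sign of the Schur term right given that $\Q^{-1}$, not $-\Q^{-1}$, sits in the $(3,3)$ block, and (ii) making explicit that the Schur equivalence is an ``if and only if'' (not just ``if''), which is valid precisely because the pivot block $\Q^{-1}$ is definite — this is what preserves non-conservatism and lets the corollary inherit the ``if and only if'' of \autoref{lem:KYPdiss}. If one instead wanted $\Q \leq 0$ with $\Q$ singular, the clean Schur argument would fail and a limiting or range-space argument would be needed; restricting to $\Q < 0$ avoids this entirely, consistent with the corollary's hypotheses.
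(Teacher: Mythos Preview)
Your proposal is correct and is essentially identical to the paper's own proof: the paper simply says to pull out $-\bmat{\C & \zero}^T\Q\bmat{\C & \zero}$ from the $2\times 2$ KYP inequality (\autoref{eqn:nonconvex_constraint} with $\D=\zero$) and apply the Schur complement using $\Q<0$. Your write-up is just a more detailed unpacking of that same step, including the explicit sign check and the observation that definiteness of the pivot block $\Q^{-1}$ is what makes the equivalence an ``if and only if.''
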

	\begin{proof} Pull out $-[\C \;\;\; \zero]^T \Q [\C \;\;\; \zero]$ from Constraint~\ref{eqn:nonconvex_constraint}, and apply Schur complement assuming $\Q<0$.
	\end{proof}
	Substituting \autoref{eqn:nonconvex_constraint} for \autoref{eqn:QSRc} or its relevant special case determined by \autoref{tbl:cases} yields the convex optimization
		\begin{equation}\label{eqn:final}
		\hspace{-1mm}	\min_{\bP>0,\Chat} \hspace{1mm} \frac{1}{N}\sum_{k=0}^N ||\Chat\xhat_k-\bu_k||_2^2{+}\eta \tr(\Chat\Chat^T)
			\hspace{1mm} \st \hspace{1mm} \mbox{LMI}(\mbox{Fig.~\ref{tbl:cases}}).
		\end{equation}
	This problem can now be solved efficiently with interior-point methods. The feasibility of Equation~\ref{eqn:final} for interior conic sectors with $a<0<b$ was established in \cite{LoCiceroCDC}, as long as $\Ahat$ is Hurwitz. The more general QSR-dissipative cases in \autoref{tbl:cases} are covered by Proposition~\ref{lem:feas}.
	\begin{prop} \label{lem:feas}
	    Let $\Ahat$ be Hurwitz, $\bS$ be full rank, and $\R\geq0$. Equation~\ref{eqn:final} is feasible with Constraint~\ref{eqn:QSRc} if $\Q<0$, or with Constraint~\ref{eqn:KYPdiss} if $\Q=\zero$. 
	\end{prop}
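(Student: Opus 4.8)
The plan is to exhibit an explicit feasible point $(\bP,\Chat)$ for each case. The natural candidate is $\Chat=\zero$, which collapses the controller to the zero map, together with a $\bP$ chosen from the Lyapunov equation for the Hurwitz matrix $\Ahat$. First I would set $\Chat=\zero$ and examine Constraint~\ref{eqn:QSRc}: the off-diagonal block $\bP\Bhat-\Chat^T\bS$ becomes $\bP\Bhat$, the $(1,1)$ block becomes $\bP\Ahat+\Ahat^T\bP$, the $(1,3)$ block becomes $\zero$, and the matrix decouples into the $(1,3)$-block row/column being zero except for the $\Q^{-1}\leq 0$ entry (which holds since $\Q<0$), leaving the genuine constraint
\begin{equation*}
\bmat{\bP\Ahat+\Ahat^T\bP & \bP\Bhat \\ * & -\R}\leq 0.
\end{equation*}
Since $\Ahat$ is Hurwitz, for any $\M>0$ there is a unique $\bP>0$ solving $\bP\Ahat+\Ahat^T\bP=-\M$. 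The remaining task is a Schur-complement argument: with $\R\geq 0$, I would argue that the block matrix above is negative semidefinite provided $\bP\Bhat$ lies in an appropriate range and the Schur complement $-\R - \Bhat^T\bP\M^{-1}\bP\Bhat\leq 0$, which is automatic once $\R\geq 0$ because the added term is negative semidefinite — but this requires $\R>0$ or a range condition when $\R$ is only PSD. The clean fix is to scale: replace $\bP$ by $\epsilon\bP$ (still a valid Lyapunov solution, now for $\M$ scaled by $\epsilon$), so the off-diagonal term shrinks like $\epsilon$ while the $(1,1)$ block shrinks like $\epsilon$; choosing $\epsilon$ small enough makes the whole expression negative semidefinite whenever $\R\geq 0$. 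This is where the hypothesis that $\bS$ is full rank is \emph{not} actually needed for $\Chat=\zero$ — so I would double-check whether the intended witness is $\Chat=\zero$ or whether feasibility of Equation~\ref{eqn:final} is being claimed with the \emph{dissipativity} genuinely active; if the latter, the full-rank $\bS$ hypothesis suggests using \autoref{lem:QSRc} in reverse, i.e., noting that the zero system with realization $(\Ahat,\Bhat,\zero,\zero)$ is trivially $(\Q,\bS,\R)$-dissipative for \emph{any} $(\Q,\bS,\R)$ with $\R\geq 0$ since its output is identically zero and the supply rate reduces to $\bu^T\R\bu\geq 0\geq\beta$.

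For the $\Q=\zero$ case with Constraint~\ref{eqn:KYPdiss}, I would run the same argument: setting $\Chat=\zero$ reduces \eqref{eqn:KYPdiss} (with $\A=\Ahat$, $\B=\Bhat$, $\C=\zero$, $\D=\zero$) to exactly
\begin{equation*}
\bmat{\bP\Ahat+\Ahat^T\bP & \bP\Bhat \\ * & -\R}\leq 0,
\end{equation*}
the identical object, so the same Lyapunov-plus-scaling construction applies verbatim. Thus both cases follow from one lemma: for Hurwitz $\Ahat$ and $\R\geq 0$, there exists $\bP>0$ making the above $2\times 2$ block matrix negative semidefinite. I would state and prove that as the single technical step.

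The main obstacle is handling the semidefinite (rather than definite) case $\R\geq 0$ cleanly, since then the Schur complement is only PSD-bordered and one cannot directly invert $\R$; the $\epsilon$-scaling of $\bP$ is the device that resolves it, and I would present that carefully (noting that $\epsilon\bP$ remains positive definite and remains a Lyapunov certificate). A secondary point to get right is the reduction itself — verifying that with $\C=\zero$, $\D=\zero$, and $\Q=\zero$, every cross term in \eqref{eqn:KYPdiss} involving $\Q$, $\D$, or $\C$ vanishes, leaving precisely the claimed block matrix — and likewise that the Schur-complemented form \eqref{eqn:QSRc} with $\Chat=\zero$ decouples the third block row/column. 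Neither is deep, but both must be stated so the reader sees that the purported feasible point genuinely satisfies the LMI of Figure~\ref{tbl:cases} in each of the two regimes.
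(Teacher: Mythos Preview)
Your candidate $\Chat=\zero$ does not yield a feasible point when $\R$ is singular, and the $\epsilon$-scaling device does not repair this. After setting $\Chat=\zero$ you must certify
\[
\bmat{\bP\Ahat+\Ahat^T\bP & \bP\Bhat \\ \Bhat^T\bP & -\R}\leq 0,
\]
but with $\R=\zero$ (the passive case, which the proposition explicitly covers) this forces $\bP\Bhat=\zero$: for any $u$ with $\Bhat^T\bP u\neq\zero$, taking $v=t\,\Bhat^T\bP u$ sends the quadratic form $u^T(\bP\Ahat+\Ahat^T\bP)u+2t\,u^T\bP\Bhat\Bhat^T\bP u$ to $+\infty$ as $t\to\infty$. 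Replacing $\bP$ by $\epsilon\bP$ simply multiplies every block by $\epsilon$ and cannot alter semidefiniteness; concretely, the Schur complement of the (strictly negative) $(1,1)$ block is $-\R+\epsilon\,\Bhat^T\bP\M^{-1}\bP\Bhat$, with the correction term entering with the \emph{wrong} sign, so no $\epsilon>0$ makes it $\leq 0$ on $\ker\R$. Your side remark that the zero input--output map is dissipative as an operator is correct but does not help: with $\Chat=\zero$ the realization is unobservable, the KYP equivalence breaks, and the optimization demands an LMI certificate, not merely operator dissipativity. You yourself noticed that the full-rank hypothesis on $\bS$ plays no role in your construction; that is the signal that something is missing.

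This is exactly why that hypothesis appears. The paper does not try to damp the off-diagonal block but annihilates it, choosing $\Chat^T=\bP\Bhat\bS^+$ with $\bS^+=(\bS^T\bS)^{-1}\bS^T$ a left pseudo-inverse, so that $\bP\Bhat-\Chat^T\bS=\zero$ and the LMI becomes block diagonal, $\diag(\bP\Ahat+\Ahat^T\bP-\Chat^T\Q\Chat,\,-\R)\leq 0$. The $(2,2)$ block is then handled directly by $\R\geq0$ with no range condition, and after congruence by $\bP^{-1}$ the $(1,1)$ block is a Lyapunov inequality for the Hurwitz matrix $\Ahat$. The key idea you are missing is to use $\bS$ to \emph{cancel} the troublesome cross term rather than to shrink it.
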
 
	\begin{proof}
	The proof is by construction of a feasible $\Chat$. Consider the first case. If $\bS$ has full rank, then the exists a left pseudo-inverse $\bS^+ = (\bS^T\bS)^{-1}\bS^T$. Let $\Chat^T = \bP\Bhat\bS^+$. Then \autoref{eqn:KYPdiss} becomes $\diag(\bP\Ahat + \Ahat^T\bP + \bP\M\bP,\,-\R) \leq 0$, where $\M = -\Bhat\bS^+\Q{\bS^+}^T\Bhat^T>0$. Front and back multiplying by $\diag(\bP^{-1},\, \I)$ yields $\diag(\Ahat\Pi + \Pi\Ahat^T + \M,\, -\R) \leq 0$, where $\Pi=\bP^{-1}$. Since $\R\geq0$, this is satisfied if $\exists \; \Pi>0$ such that $\Ahat\Pi + \Pi\Ahat^T + \M \leq 0$. Since $\Ahat$ is Hurwitz, so too is $\Ahat^T$, and by Lyapunov's Lemma \cite{Dullerud2005}, this $\Pi$ exists for any $\M>0$. For the second case, let $\Q=\zero$, $\Chat^T = \bP\Bhat\bS^+$. Then \autoref{eqn:KYPdiss} is $\diag(\bP\Ahat + \Ahat^T\bP,\, -\R)\leq 0$. Using the same reasoning, this is satisfied if $\R\geq 0$ and $\Ahat$ is Hurwitz.
	\end{proof}
	
	\begin{figure}
	\centering
		\begin{tabular}{|c | c | c | c|}
			\hline
			Case & QSR & LMI \\
			\hline
			Passive & $\Q=\zero$, $\bS=\frac{1}{2}\I$, $\R=\zero$ & (\ref{eqn:KYPpass}) or (\ref{eqn:KYPdiss}) \\ 
			\hline
			$\gamma$-Bounded Gain & $\Q=-\I$, $\bS=\zero$, $\R=\gamma^2\I$ & (\ref{eqn:QSRc}) \\
			\hline
			\makecell{Nondegenerate \\ Interior Conic \\ $a<0<b$} & \makecell{$\Q=-\I$, $\R=-ab\I$, \\ $\bS=\frac{a+b}{2}\I$} & (\ref{eqn:QSRc}) or \cite{Bridgeman2014} \\
			\hline
			\makecell{Degenerate \\ Interior Conic \\ $d<0$} & \makecell{$\Q=\zero$, $\R=-d\I$, \\ $\bS=\frac{1}{2}\I$} & (\ref{eqn:KYPdiss}) or \cite{Bridgeman2014} \\
			\hline
			\multirow{2}{*}{QSR-dissipative} & $\Q=\zero$, $\R\geq0$, any $\bS$ & (\ref{eqn:KYPdiss}) \\
			& $\Q<0$, $\R\geq0$ any $\bS$ & (\ref{eqn:QSRc})  \\
			\hline
		\end{tabular}
		\caption{Important cases of QSR-dissipativity, their formulations, and the LMI(s) that can be used to impose the property on an LTI system with the output matrix as a free variable.}
		\label{tbl:cases}
	\end{figure}

	\section{Numerical Example} \label{sec:example}
	
	Two experiments are provided to illustrate the utility of the proposed design framework. In each case, LQR-optimal state feedback is used as the expert control policy, and state-estimate feedback based on an LQR-optimal observer is learned according to Section \ref{sec:mainresults}. The first example demonstrates how the proposed design compares to existing robust behavior cloning techniques. A simple passive system is used and the controller's performance in the presence of parametric uncertainty and limited data is explored. The second example demonstrates how QSR-dissipativity can be used to design for networks of nonlinear systems.
	
	\subsection{A Passive System}
	
	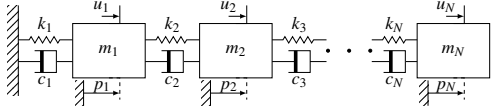
\begin{figure}
	\centering
	\resizebox{.75\columnwidth}{!}{%
		\begin{tikzpicture}[every node/.style={outer sep=0pt},thick,
 mass/.style = {draw,thick},
 spring/.style = {thick,decorate,decoration={zigzag,pre length=0.3cm,post length=0.3cm,segment length=6}},
 ground/.style = {fill,pattern=north east lines,draw=none,minimum width=0.75cm,minimum height=0.3cm},
 dampic/.pic={
 \fill[white] (-0.1,-0.3) rectangle (0.3,0.3);
 \draw (-0.3,0.3) -| (0.3,-0.3) -- (-0.3,-0.3);
 \draw[line width=1mm] (-0.1,-0.3) -- (-0.1,0.3);
 },
roundnode/.style={circle, draw=black, fill=black, scale = 0.25, minimum size=1mm}
 ]

  \node[mass,minimum width=2cm,minimum height=1.5cm] (m1) {\Large $m_1$};
  \node[mass,minimum width=2cm,minimum height=1.5cm,right=1.5cm of
  m1] (m2) {\Large $m_2$};
   \node[mass,minimum width=2cm,minimum height=1.5cm,right=4cm of
  m2] (m3) {\Large $m_{N}$};
  
   \node[left=1.5cm of m1,ground,minimum width=3mm,minimum height=2.5cm] (g1){};
  \draw (g1.north east) -- (g1.south east);
  
  \node[roundnode, right=1.5 of m2] (d1){};
  \node[roundnode, right=0.5cm of d1] (d2){};
  \node[roundnode, right=0.5cm of d2] (d3){};

  \draw[spring] ([yshift=3mm]g1.east) coordinate(aux)
   -- (m1.west|-aux) node[midway,above=1mm]{\Large $k_1$};
  \draw[spring]  (m1.east|-aux) -- (m2.west|-aux) node[midway,above=1mm]{\Large $k_2$};
  \draw[spring]  (m2.east|-aux) -- (d1.west|-aux) node[midway,above=1mm]{\Large $k_3$};
  \draw[spring]  (d3.east|-aux) -- (m3.west|-aux) node[midway,above=1mm]{\Large $k_{N}$};

  \draw ([yshift=-3mm]g1.east) coordinate(aux')
   -- (m1.west|-aux') pic[midway]{dampic} node[midway,below=3mm]{\Large$c_1$}
     (m1.east|-aux') -- (m2.west|-aux') pic[midway]{dampic} node[midway,below=3mm]{\Large $c_2$}
     (m2.east|-aux') -- (d1.west|-aux') pic[midway]{dampic} node[midway,below=3mm]{\Large $c_3$}
     (d3.east|-aux') -- (m3.west|-aux') pic[midway]{dampic} node[midway,below=3mm]{\Large $c_{N}$};

  \foreach \X in {1,2}  
  {\draw[thin] (m\X.north) -| ++ (0.3,0.5) coordinate[pos=.7](aux\X);
   \draw[latex-] (aux\X) -- ++ (-0.5,0) node[above]{\Large $u_\X$}; 
   \draw[thin,dashed] (m\X.south) -| ++ (0.3,-0.6) coordinate[pos=0.85](aux'\X);
   \draw[latex-] (aux'\X) -- ++ (-1,0) node[midway,above,yshift=-1mm]{\Large $p_\X$}
    node[left,ground,minimum height=7mm,minimum width=1mm] (g'\X){};
   \draw[thick] (g'\X.north east) -- (g'\X.south east);
  }
  \draw[thin] (m3.north) -| ++ (0.3,0.5) coordinate[pos=.7](aux3);
   \draw[latex-] (aux3) -- ++ (-0.5,0) node[above]{\Large $u_{N}$}; 
   \draw[thin,dashed] (m3.south) -| ++ (0.3,-0.6) coordinate[pos=0.85](aux'3);
   \draw[latex-] (aux'3) -- ++ (-1,0) node[midway,above,yshift=-1mm]{\Large $p_{N}$}
    node[left,ground,minimum height=7mm,minimum width=1mm] (g'3){};
   \draw[thick] (g'3.north east) -- (g'3.south east);

  \end{tikzpicture}
	}
	\caption{A chain of masses connected by springs and dampers}
	\label{fig:passive}
    \end{figure}

	The plant under consideration is a chain of $N=4$ unit masses connected by springs and dampers, as in \autoref{fig:passive}. 
	Nominal system parameter values are sampled from the uniform distributions $k_n^i\in[k_l,k_h]$ and $c_n^i\in[c_l,c_h],$ while the true system parameters ($\mathbf{k}_t, \mathbf{c}_t)$ are created using nominal system values and percentage of parameter uncertainty. The inputs to the system are forces applied to each mass, and the outputs are velocity measurements, which ensures the system is passive for any parameter set.
	
	The expert policy is designed as a noisy LQR-optimal static state feedback, $\bu_t=-\K_t\x_t + \e$, where $\e$ is noise, $\x_t$ is the state of the true system $(\A_t,\B_t,\C_t,\zero)$, and $\K_t=\E_1^{-1}\B_t^T\Pi_1$, where $\Pi_1$ solves $\A_t^T\Pi_1 + \Pi_1\A_t -\Pi_1\B_t\E_1^{-1}\B_t^T\Pi_1 + \C_t^T\F_1\C_t = \zero$ using $\E_1 = 10$ and $\F_1 =  0.1$. An LQR-optimal observer for the autonomous 
	nominal system, $(\A_n,\B_n,\C_n,\zero)$, is designed as $(\Ahat,\,\Bhat) = (\A_n-\bL\C_n,\, \bL)$, where $\bL=\Pi_`\C_n^T\E_2^{-1}$, and $\Pi_2$ solves $\Pi_2\A_n^T + \A_n\Pi_2 - \Pi_2\C_n^T\E_2^{-1}\C_n\Pi_2 + \B_n\F_2\B_n^T = \zero$ with $\E_2 = 0.5$ and $\F_2 = 0.1$. 
	
	Training data of the expert demonstrator is generated by simulating the expert policy stabilizing the true system from randomized initial system states, while the observer collects state estimates. Initial states are sampled from $\mathcal{N}(\zero, \I)$. Noise ($\mathcal{N}(\zero, 0.25^2\I)$) is added to the controller and plant inputs. Pairs of control actions $\bu_t$ and state estimates $\xhat$ are collected throughout the 10 second duration of each trajectory. Then feedback matrix $\Chat$ is designed using Equation~\ref{eqn:final} with LMI~\ref{eqn:KYPdiss} and $\Q=\zero$, $\bS=\frac{1}{2}\I$, $\R=\zero$, $\eta = 0.05$. The resulting controller $(\Ahat,\Bhat,\Chat,\zero)$ is referred to as the passivity-constrained learner and is passive, as desired.
	
	Performance of the learned controllers is explored in simulations by varying the amount of training data and parametric uncertainty. The passivity-constrained learned controller is compared to an unconstrained learned controller (learned with Equation~\ref{eqn:final} without the constraint) and a Lyapunov-constrained learned controller \cite{Havens2021} which enforces stability through closed-loop state space conditions. Performance is evaluated by implementing  the expert and learned controllers on 100 trajectory simulations in which the initial conditions are randomly sampled from $\mathcal{N}(\zero, 20^2\I),$ well outside of those in the training data. For evaluation, the cost function is defined as, $\frac{1}{N}\sum_{k=0}^N||\x_e(k)-\x(k)||^2_2,$ where $\x_e(k)$ is the expert system states, $\x(k)$ is the system states when using the learned controller, and $N$ is the number of time steps in the simulated trajectory. The plant and control input noise ($\mathcal{N}(\zero, 0.25^2\I)$) is consistent across simulations.
	
	Figure \ref{fig:dataVary} shows the cost of the unconstrained, passivity-constrained, and the Lyapunov-constrained learners \cite{Havens2021} across 50 randomized systems when trained with variable amounts of training trajectories. Nominal system parameters are drawn from the uniform distributions, $k_n^i\in[0.001,10]$ and $c_n^i\in[0.001,1]$. Parametric uncertainty with respect to the true system is 50\%. When trained with one trajectory, the unconstrained learner remains stable for only 70\% of systems. As training trajectories increases, the unconstrained learner then maintains stability for all systems. Both the passivity-constrained and Lyapunov-constrained learners are stable for all amounts of training trajectories. The passivity-constrained learner quickly finds and maintains a consistent performance as training data amount increases. In contrast, the Lyapunov-constrained learner steadily improves in performance. This difference is likely because Equation~\ref{eqn:final} used to find the passivity constrained-learner can be solved using interior point methods, while the Lyapunov-constrained learner \cite{Havens2021} requires projected gradient descent.
	
	Fixing the training data to 25 trajectories, the uncertainty in nominal and true system parameters are then varied from 0 to 100\% for 40 randomized systems. Nominal system parameters are drawn from the uniform distributions $k_n^i\in[0.001,100]$ and $c_n^i\in[0.001,100]$ to represent a large space of parameter uncertainty. The resulting cost and stability results are shown in Figure \ref{fig:paramVary}. Both the passivity-constrained and unconstrained learners exhibit reduced performance with increased parameter variations but remain stable throughout, while the Lyapunov-constrained learner remains stable for only 77.5\% of systems when parameter uncertainty is 25\% and further decreases as uncertainty increases. 
	
    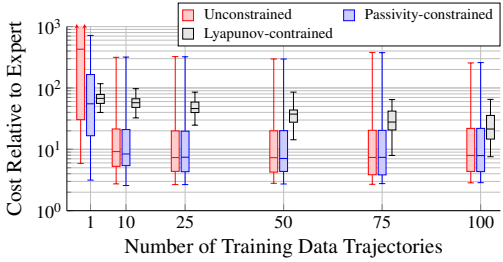
\begin{figure}
	    \centering
	    \resizebox{.8\columnwidth}{!}{%
	%
	%
	\definecolor{mycolor1}{rgb}{0.00000,0.44700,0.74100}%
	\definecolor{mycolor2}{rgb}{0.85000,0.32500,0.09800}%
	\definecolor{mycolor3}{rgb}{0.92900,0.69400,0.12500}%
	\definecolor{mycolor4}{rgb}{0.49400,0.18400,0.55600}%
	\definecolor{mycolor5}{rgb}{0.46600,0.67400,0.18800}%
	\definecolor{mycolor6}{rgb}{0.30100,0.74500,0.93300}%
	\def \w {2}
	\def \fsize {\Large}
	\def \tsize {\large}
	\begin{tikzpicture}
		
		\begin{axis}[
			at={(0in,0in)},
			grid=both,
			width=4in,
			height=1.7in,
			scale only axis,
			bar shift auto,
			xmin=-5,
			xmax=105,
			xtick={1,10,25,50,75,100},
			ymin=1,
			ymax=1000,
			ymode=log,
			xlabel={Number of Training Data Trajectories},
			xlabel style={font=\color{white!15!black},font=\fsize, align=center,yshift=0cm},
			ylabel style={font=\color{white!15!black},font=\fsize, align=center,yshift=0cm},
			ylabel={Cost Relative to Expert},
			tick label style={font=\tsize},
			axis background/.style={fill=white},
			axis x line*=bottom,
			axis y line*=left,
			legend style={yshift=-3.7cm},
			legend entries={Unconstrained, Passivity-constrained, Lyapunov-contrained},
			legend style={/tikz/every even column/.append style={column sep=0.2cm}},
			legend cell align=left,        
			legend style={
				anchor=north west,
				legend columns=2,
				cells={align=left}},
			boxplot/draw direction=y,
			/pgfplots/boxplot/box extend=2,
			legend style={at={(1,1.75)},anchor=south east, column sep=0.5em}
			]
			\addplot[only marks, mark=square*, mark options={scale=1.8, fill=blue}]
			coordinates{ 
				(NaN,NaN)    };
			\addlegendimage{only marks, mark=square*, mark options={scale=1.8, fill=red!20,draw=red}}
			\addplot[only marks, mark=square*, mark options={scale=1.8, fill=red}]
			coordinates{ 
				(NaN,NaN)    };
			\addlegendimage{only marks, mark=square*, mark options={scale=1.8, fill=blue!20,draw=blue}}
			\addplot[only marks, mark=square*, mark options={scale=1.8, fill=blue}]
			coordinates{ 
				(NaN,NaN)    };
			\addlegendimage{only marks, mark=square*, mark options={scale=1.8, fill=gray!20}}
			\addplot+[
			boxplot prepared={
				lower whisker=5.9,
				lower quartile=30.41,
				median=428.17,
				upper quartile=10000,
				upper whisker=10000,
				draw position=-1.5,
				every box/.style={solid,draw=red,fill=red!20},
				every whisker/.style={solid,red},
				every median/.style={solid,red},
			},
			]  coordinates {};
			\addplot+[
			boxplot prepared={
				lower whisker=3.15,
				lower quartile=16.64,
				median=55.2,
				upper quartile=165.38,
				upper whisker=717.16,
				draw position=1,
				every box/.style={solid,draw=blue,fill=blue!20},
				every whisker/.style={solid,blue},
				every median/.style={solid,blue},
			},
			]  coordinates {};
			\addplot+[
			boxplot prepared={
				lower whisker=39.87,
				lower quartile=55.97,
				median=67.64,
				upper quartile=78.35,
				upper whisker=117.31,
				draw position=3.5,
				every box/.style={solid,draw=black,fill=gray!20},
				every whisker/.style={solid,black},
				every median/.style={solid,black},
			},
			]  coordinates {};
						\addplot+[
			boxplot prepared={
				lower whisker=2.73,
				lower quartile=5.28,
				median=9.17,
				upper quartile=21.45,
				upper whisker=315.79,
				draw position=7.5,
				every box/.style={solid,draw=red,fill=red!20},
				every whisker/.style={solid,red},
				every median/.style={solid,red},
			},
			]  coordinates {};
			\addplot+[
			boxplot prepared={
				lower whisker=2.57,
				lower quartile=5.42,
				median=8.39,
				upper quartile=20.97,
				upper whisker=318.82,
				draw position=10,
				every box/.style={solid,draw=blue,fill=blue!20},
				every whisker/.style={solid,blue},
				every median/.style={solid,blue},
			},
			]  coordinates {};
			\addplot+[
			boxplot prepared={
				lower whisker=32.5,
				lower quartile=48.2,
				median=57.45,
				upper quartile=66.7,
				upper whisker=97.29,
				draw position=12.5,
				every box/.style={solid,draw=black,fill=gray!20},
				every whisker/.style={solid,black},
				every median/.style={solid,black},
			},
			]  coordinates {};
			\addplot+[
			boxplot prepared={
				lower whisker=2.66,
				lower quartile=4.37,
				median=7.33,
				upper quartile=19.94,
				upper whisker=324.15,
				draw position=22.5,
				every box/.style={solid,draw=red,fill=red!20},
				every whisker/.style={solid,red},
				every median/.style={solid,red},
			},
			]  coordinates {};
			\addplot+[
			boxplot prepared={
				lower whisker=2.66,
				lower quartile=4.31,
				median=7.43,
				upper quartile=19.75,
				upper whisker=322.11,
				draw position=25,
				every box/.style={solid,draw=blue,fill=blue!20},
				every whisker/.style={solid,blue},
				every median/.style={solid,blue},
			},
			]  coordinates {};
			\addplot+[
			boxplot prepared={
				lower whisker=24.74,
				lower quartile=39.74,
				median=46.15,
				upper quartile=58.86,
				upper whisker=85.53,
				draw position=27.5,
				every box/.style={solid,draw=black,fill=gray!20},
				every whisker/.style={solid,black},
				every median/.style={solid,black},
			},
			]  coordinates {};
			\addplot+[
			boxplot prepared={
				lower whisker=2.79,
				lower quartile=4.25,
				median=7.3,
				upper quartile=20.05,
				upper whisker=296.88,
				draw position=47.5,
				every box/.style={solid,draw=red,fill=red!20},
				every whisker/.style={solid,red},
				every median/.style={solid,red},
			},
			]  coordinates {};
			\addplot+[
			boxplot prepared={
				lower whisker=2.72,
				lower quartile=4.36,
				median=7.08,
				upper quartile=20.12,
				upper whisker=296.3,
				draw position=50,
				every box/.style={solid,draw=blue,fill=blue!20},
				every whisker/.style={solid,blue},
				every median/.style={solid,blue},
			},
			]  coordinates {};
			\addplot+[
			boxplot prepared={
				lower whisker=14.31,
				lower quartile=27.66,
				median=37.46,
				upper quartile=43.73,
				upper whisker=85.47,
				draw position=52.5,
				every box/.style={solid,draw=black,fill=gray!20},
				every whisker/.style={solid,black},
				every median/.style={solid,black},
			},
			]  coordinates {};
			\addplot+[
			boxplot prepared={
				lower whisker=2.69,
				lower quartile=3.85,
				median=7.37,
				upper quartile=20.42,
				upper whisker=379.99,
				draw position=72.5,
				every box/.style={solid,draw=red,fill=red!20},
				every whisker/.style={solid,red},
				every median/.style={solid,red},
			},
			]  coordinates {};
			\addplot+[
			boxplot prepared={
				lower whisker=2.76,
				lower quartile=3.83,
				median=7.39,
				upper quartile=20.37,
				upper whisker=375.82,
				draw position=75,
				every box/.style={solid,draw=blue,fill=blue!20},
				every whisker/.style={solid,blue},
				every median/.style={solid,blue},
			},
			]  coordinates {};
			\addplot+[
			boxplot prepared={
				lower whisker=7.94,
				lower quartile=20.67,
				median=27.83,
				upper quartile=41.94,
				upper whisker=64.64,
				draw position=77.5,
				every box/.style={solid,draw=black,fill=gray!20},
				every whisker/.style={solid,black},
				every median/.style={solid,black},
			},
			]  coordinates {};
			\addplot+[
			boxplot prepared={
				lower whisker=2.84,
				lower quartile=4.36,
				median=7.95,
				upper quartile=21.93,
				upper whisker=255.87,
				draw position=97.5,
				every box/.style={solid,draw=red,fill=red!20},
				every whisker/.style={solid,red},
				every median/.style={solid,red},
			},
			]  coordinates {};
			\addplot+[
			boxplot prepared={
				lower whisker=2.86,
				lower quartile=4.33,
				median=7.9,
				upper quartile=21.94,
				upper whisker=259.05,
				draw position=100,
				every box/.style={solid,draw=blue,fill=blue!20},
				every whisker/.style={solid,blue},
				every median/.style={solid,blue},
			},
			]  coordinates {};
			\addplot+[
			boxplot prepared={
				lower whisker=7.57,
				lower quartile=14.66,
				median=21.4,
				upper quartile=35.66,
				upper whisker=65.2,
				draw position=102.5,
				every box/.style={solid,draw=black,fill=gray!20},
				every whisker/.style={solid,black},
				every median/.style={solid,black},
			},
			]  coordinates {};
		\end{axis}
		\draw [-stealth,color=red](.413,4.33) -- (.413,4.38);
		\draw [-stealth,color=red](.23,4.33) -- (.23,4.38);

	\end{tikzpicture}%
	
	    }
		\caption{Performance of learned controllers for variable amounts of training data trajectories. Systems are designed with $k_n^i\in[0.001,10]$, $c_n^i\in[0.001,1]$, and 50\% parameter uncertainty. Arrows indicate plot extends to infinity.}
		\label{fig:dataVary}
	\end{figure}
	
    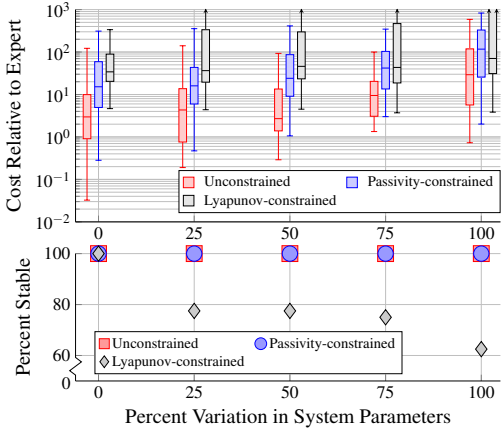
\begin{figure}
	    \centering
	    \resizebox{.8\columnwidth}{!}{%
	%
	%
	\definecolor{mycolor1}{rgb}{0.00000,0.44700,0.74100}%
	\definecolor{mycolor2}{rgb}{0.85000,0.32500,0.09800}%
	\definecolor{mycolor3}{rgb}{0.92900,0.69400,0.12500}%
	\definecolor{mycolor4}{rgb}{0.49400,0.18400,0.55600}%
	\definecolor{mycolor5}{rgb}{0.46600,0.67400,0.18800}%
	\definecolor{mycolor6}{rgb}{0.30100,0.74500,0.93300}%
	\def \w {2}
	\def \fsize {\Large}
	\def \tsize {\large}
	\begin{tikzpicture}
		
		\begin{axis}[
			at={(0in,0in)},
			grid=both,
			width=4in,
			height=2in,
			scale only axis,
			bar shift auto,
			xmin=-6,
			xmax=105,
			xtick={0,25,50,75,100},
			ymin=.01,
			ymax=1000,
			ymode=log,
			ylabel style={font=\color{white!15!black},font=\fsize, align=center,yshift=0cm},
			ylabel={Cost Relative to Expert},
			tick label style={font=\tsize},
			axis background/.style={fill=white},
			axis x line*=bottom,
			axis y line*=left,
			legend style={yshift=-3.85cm},
			legend entries={Unconstrained, Passivity-constrained, Lyapunov-constrained},
			legend style={/tikz/every even column/.append style={column sep=0.2cm}},
			legend cell align=left,        
			legend style={yshift=-.4in,
				anchor=north west,
				legend columns=2,
				cells={align=left}},
			boxplot/draw direction=y,
			/pgfplots/boxplot/box extend=2,
			legend style={at={(1,1)},anchor=south east,legend columns=2,
				column sep=0.5em}
			]
			\addplot[only marks, mark=square*, mark options={scale=1.8, fill=blue}]
			coordinates{ 
				(NaN,NaN)    };
			\addlegendimage{only marks, mark=square*, mark options={scale=1.8, fill=red!20,draw=red}}
			\addplot[only marks, mark=square*, mark options={scale=1.8, fill=red}]
			coordinates{ 
				(NaN,NaN)    };
			\addlegendimage{only marks, mark=square*, mark options={scale=1.8, fill=blue!20,draw=blue}}
			\addplot[only marks, mark=square*, mark options={scale=1.8, fill=blue}]
			coordinates{ 
				(NaN,NaN)    };
			\addlegendimage{only marks, mark=square*, mark options={scale=1.8, fill=gray!20}}
			\addplot+[
			boxplot prepared={
				lower whisker=0.0325,
				lower quartile=0.9092,
				median=2.967,
				upper quartile=9.9462,
				upper whisker=122.44,
				draw position=-3,
				every box/.style={solid,draw=red,fill=red!20},
				every whisker/.style={solid,red},
				every median/.style={solid,red},
			},
			]  coordinates {};
			\addplot+[
			boxplot prepared={
				lower whisker=.282,
				lower quartile=4.95,
				median=15.21,
				upper quartile=59.28,
				upper whisker=311.13,
				draw position=0,
				every box/.style={solid,draw=blue,fill=blue!20},
				every whisker/.style={solid,blue},
				every median/.style={solid,blue},
			},
			]  coordinates {};
						\addplot+[
			boxplot prepared={
				lower whisker=4.65,
				lower quartile=20.39,
				median=34.18,
				upper quartile=89.27,
				upper whisker=336.74,
				draw position=3,
				every box/.style={solid,draw=black,fill=gray!20},
				every whisker/.style={solid,black},
				every median/.style={solid,black},
			},
			]  coordinates {};
			\addplot+[
			boxplot prepared={
				lower whisker=0.19,
				lower quartile=0.76,
				median=4.32,
				upper quartile=13.71,
				upper whisker=140.55,
				draw position=22,
				every box/.style={solid,draw=red,fill=red!20},
				every whisker/.style={solid,red},
				every median/.style={solid,red},
			},
			]  coordinates {};
			\addplot+[
			boxplot prepared={
				lower whisker=.47,
				lower quartile=5.98,
				median=16,
				upper quartile=43.5,
				upper whisker=355.5,
				draw position=25,
				every box/.style={solid,draw=blue,fill=blue!20},
				every whisker/.style={solid,blue},
				every median/.style={solid,blue},
			},
			]  coordinates {};
			\addplot+[
			boxplot prepared={
				lower whisker=4.38,
				lower quartile=19.53,
				median=36.49,
				upper quartile=335.62,
				upper whisker=10000,
				draw position=28,
				every box/.style={solid,draw=black,fill=gray!20},
				every whisker/.style={solid,black},
				every median/.style={solid,black},
			},
			]  coordinates {};
			\addplot+[
			boxplot prepared={
				lower whisker=0.29,
				lower quartile=1.39,
				median=2.7,
				upper quartile=13.46,
				upper whisker=93,
				draw position=47,
				every box/.style={solid,draw=red,fill=red!20},
				every whisker/.style={solid,red},
				every median/.style={solid,red},
			},
			]  coordinates {};
			\addplot+[
			boxplot prepared={
				lower whisker=1.06,
				lower quartile=9.09,
				median=24.08,
				upper quartile=86.79,
				upper whisker=413.31,
				draw position=50,
				every box/.style={solid,draw=blue,fill=blue!20},
				every whisker/.style={solid,blue},
				every median/.style={solid,blue},
			},
			]  coordinates {};
			\addplot+[
			boxplot prepared={
				lower whisker=4.51,
				lower quartile=23.38,
				median=46.04,
				upper quartile=298.28,
				upper whisker=10000,
				draw position=53,
				every box/.style={solid,draw=black,fill=gray!20},
				every whisker/.style={solid,black},
				every median/.style={solid,black},
			},
			]  coordinates {};
			\addplot+[
			boxplot prepared={
				lower whisker=1.34,
				lower quartile=3.07,
				median=9.48,
				upper quartile=20.39,
				upper whisker=100.13,
				draw position=72,
				every box/.style={solid,draw=red,fill=red!20},
				every whisker/.style={solid,red},
				every median/.style={solid,red},
			},
			]  coordinates {};
			\addplot+[
			boxplot prepared={
				lower whisker=3.01,
				lower quartile=13.5,
				median=41.96,
				upper quartile=103.59,
				upper whisker=346.35,
				draw position=75,
				every box/.style={solid,draw=blue,fill=blue!20},
				every whisker/.style={solid,blue},
				every median/.style={solid,blue},
			},
			]  coordinates {};
			\addplot+[
			boxplot prepared={
				lower whisker=3.7,
				lower quartile=18.71,
				median=43.39,
				upper quartile=468.12,
				upper whisker=10000,
				draw position=78,
				every box/.style={solid,draw=black,fill=gray!20},
				every whisker/.style={solid,black},
				every median/.style={solid,black},
			},
			]  coordinates {};
			\addplot+[
			boxplot prepared={
				lower whisker=.73,
				lower quartile=5.67,
				median=29.27,
				upper quartile=118.57,
				upper whisker=588.5,
				draw position=97,
				every box/.style={solid,draw=red,fill=red!20},
				every whisker/.style={solid,red},
				every median/.style={solid,red},
			},
			]  coordinates {};
			\addplot+[
			boxplot prepared={
				lower whisker=2.01,
				lower quartile=25.75,
				median=117.08,
				upper quartile=330.2,
				upper whisker=830.05,
				draw position=100,
				every box/.style={solid,draw=blue,fill=blue!20},
				every whisker/.style={solid,blue},
				every median/.style={solid,blue},
			},
			]  coordinates {};
			\addplot+[
			boxplot prepared={
				lower whisker=3.84,
				lower quartile=31.09,
				median=70.6,
				upper quartile=10000,
				upper whisker=10000,
				draw position=103,
				every box/.style={solid,draw=black,fill=gray!20},
				every whisker/.style={solid,black},
				every median/.style={solid,black},
			},
			]  coordinates {};
		\end{axis}
	\draw [-stealth](5.4,5.05) -- (5.4,5.1);
	\draw [-stealth](3.11,5.05) -- (3.11,5.1);
	\draw [-stealth](7.69,5.05) -- (7.69,5.1);
	\draw [-stealth](9.885,5.05) -- (9.885,5.1);
	\draw [-stealth](10.065,5.05) -- (10.065,5.1);
		
		\begin{axis}[%
			width=4in,
			axis y discontinuity=crunch,
			height=1.2in,
			at={(0in,-1.5in)},
			scale only axis,
			xmin=-6,
			xmax=105,
			xtick={0,25,50,75,100},
			ytick={0,60,80,100},
			xlabel style={font=\color{white!15!black},font=\fsize},
			xlabel={Percent Variation in System Parameters},
			ymin=50,
			ymax=100,
			yminorticks=true,
			grid,
			ylabel style = {font=\fsize,yshift=.09cm},
			ylabel={Percent Stable},
			tick label style={font=\tsize},
			axis background/.style={fill=white},
			axis x line*=bottom,
			axis y line*=left,
			legend style={yshift=-1.8cm, xshift=-9.5cm},
			legend entries={Unconstrained, QSR-constrained, Havens},
			legend style={/tikz/every even column/.append style={column sep=0.2cm}},
			legend cell align=left,        
			legend style={yshift=0in,
				anchor=north west,
				legend columns=2,
				cells={align=left}},
			]
			\addplot[only marks, mark=square*, mark options={scale=2.6, fill=blue}]
			coordinates{ 
				(NaN,NaN)    };
			\addlegendimage{only marks, mark=square*, mark options={scale=1.7, fill=red!40,draw=red}}
			\addplot[only marks, mark=square*, mark options={scale=2.6, fill=red}]
			coordinates{ 
				(NaN,NaN)    };
			\addlegendimage{only marks, mark=*, mark options={scale=2, fill=blue!40,draw=blue}}
			\addplot[only marks, mark=square*, mark options={scale=2.6, fill=blue}]
			coordinates{ 
				(NaN,NaN)    };
			\addlegendimage{only marks, mark=diamond*, mark options={scale=2, fill=gray!40}}
			\addplot[only marks,mark=square*,mark options={scale=2.7,fill=red!40,draw=red}] table[row sep=crcr] {%
				0 100 \\
				25 100 \\
				50 100 \\
				75 100 \\
				100 100\\
			};
			\addlegendentry{Unconstrained}
			\addplot[only marks,mark=*,mark options={scale=2.6,fill=blue!40,draw=blue}] table[row sep=crcr] {%
				0 100 \\
				25 100 \\
				50 100 \\
				75 100 \\
				100 100\\
			};
			\addlegendentry{Passivity-constrained}
			\addplot[only marks,mark=diamond*,mark options={scale=2.6,fill=gray!40,draw=black}] table[row sep=crcr] {%
				0 100 \\
				25 77.5\\
				50 77.5\\
				75 75\\
				100 62.5\\
			};
			\addlegendentry{Lyapunov-constrained}
		\end{axis}
		\node[] at (-.09in,-1.52in) {\large 0};
		
	\end{tikzpicture}%
	
	    }
	    \caption{Performance of learned controllers across variable system parameter uncertainty when trained with 25 training data trajectories. Nominal parameter values are drawn from $k_n^i\in[0.001,100]$ and $c_n^i\in[0.001,100]$. Arrows indicate plot extends to infinity.}
	    \label{fig:paramVary}
    \end{figure}

	\subsection{A QSR System}
	
	Consider $\Gcal_t:(u_1,\, u_2) \rightarrow(y_1,\, y_2)$, the interconnection of two subsystems $\Gcal_1^t:e_1\rightarrow y_1$ and $\Gcal_2^t:e_2\rightarrow y_2$ defined by
	\begin{align*}
		\Gcal^t_1 : \begin{cases}
		\dot x_1 = -x_1^3-x_1+e_1 \\
		y_1 = \dot x_1-2e_1 \\
		e_1 = u_1+y_2,
	    \end{cases} &
	    \Gcal^t_2 : \begin{cases}
	    \dot \x_2 = \A_{2t}\x_2 + \B_{2t} e_2 \\
	    y_2 = \C_{2t} \x_2 \\
	    e_2 = u_2-y_1,
	    \end{cases}
	\end{align*}
	where $\Gcal^t_2$ is the same system as in the prior experiment with two masses and only one input and output, applied to the first mass. The system parameters are $m_1,m_2 = 0.5$, $k_1,k_2 = 5$, $c_1,c_2 = 10^{-3}$. The expert control policy is $\bu_t = -\K_t\x_t+\mathbf{e}$, where $\mathbf{e}$ is noise, $\x_t^T = [x_1, \x_2^T]$, $\bu_t^T = [u_1, u_2]$, and $\K_t$ is LQR-optimal state feedback gain for $\Gcal_t$ linearized about $(x_1,e_1) = (0,0)$. This is designed similarly to the previous example with parameters $\E_1 = 1000\I$ and $\mathbf{F}_1=\I$. 
	
	For the purposes of control design, $\Gcal_1^t$ is known perfectly but the nominal system is linearized about $(x_1,e_1) = (0,0)$. Meanwhile, $\Gcal^t_2$ is a poorly understood subsystem. To reflect the lack of modeling information, it is estimated as a single-mass system with nominal parameters $m_1=1$, $k_1 = 2.5$, $c_1 = 0.05$. This reflects a well measured lumped mass and spring constant and overestimated damping. The resulting nominal system is $\Gcal_n:(\A_n,\B_n,\C_n,\D_n)$, defined by subsystems
		\begin{align*}
		\Gcal_1^n : \begin{cases}
		\dot x_1 = -x_1+e_1 \\
		y_1 = -x_1-e_1\\
		e_1 = u_1+y_2,
	    \end{cases} &
	    \Gcal_2^n : \begin{cases}
	    \dot \x_2 = \A_{2n}\x_2 + \B_{2n} e_2 \\
	    y_2 = \C_{2n} \x_2\\
	    e_2 = u_2-y_1.
	    \end{cases}
	\end{align*}
	
	Since the learned controller does not have direct access to states, an LQR-optimal observer of the form $(\Ahat,\,\Bhat) = (\A_n-\bL\C_n,\,\bL)$ is designed, where $\bL$ is designed as in the previous section with $\E_2 = 10\I$ and $\mathbf{F}_2 = \I$. Since the nonlinear $\Gcal^t_1$ is known perfectly, it is known that $\Gcal^t_1\in\cone[-2,-1]$ \cite{Bridgeman2016}. Although $\Gcal^t_2$ is modeled poorly, the true system is known to be passive from first principles. Applying \autoref{lem:netQSR}, $\Gcal_t$ is QSR-dissipative with parameters
	\begin{equation*}
	    \Qbar = \bmat{-1 & 1 \\ 1 & -2}, \optSpace \Sbar = \bmat{-\frac{3}{2} & 0 \\ 2 & \frac{1}{2}}, \optSpace \Rbar = \bmat{-2 & 0 \\ 0 & 0}
	\end{equation*}
	from $\bu = [u_1,\; u_2]^T$ to $\y = [y_1,\; y_2]^T$. Then applying \autoref{thm:QSR}, any controller stabilizes the true system if it is QSR-dissipative with respect to
	\begin{equation*}
	    \Q_c {=} \bmat{-0.52 & 0 \\ 0 & -1.04},\,\bS_c {=} \bmat{-\frac{3}{2} & 2 \\ 0 & \frac{1}{2}},\, \R_c {=} \bmat{0.45 & -.48 \\ * & 0.92}.
	\end{equation*}
	
	Two learned controllers are designed. The first, $\mathcal{C}_{QSR}:(\Ahat,\Bhat,\Chat_{QSR},\zero)$, is designed using dissipativity-constrained behavior cloning as outlined in \autoref{sec:mainresults}, where the controller is constrained to be $(\Q_c,\bS_c,\R_c)$-dissipative. The second, $\mathcal{C}_{nc}:(\Ahat,\Bhat,\Chat_{nc},\zero)$, is designed using unconstrained behavior cloning, which is achieved by solving \autoref{eqn:final} without the constraint. In both cases, $\eta = 0.05$, and training data with the expert controller is generated using 15-second trajectories with uniformly distributed initial conditions within $||\x_0||\leq5$, $||\xhat_0||\leq 0.25$. White noise with distribution $\mathcal{N}(\zero,0.25^2\I)$ was added to plant and controller inputs to represent environmental disturbances. 

	To evaluate the performance of $\mathcal{C}_{QSR}$ and $\mathcal{C}_{nc}$, the two learned controllers and the expert were simulated over twenty-five 15-second trajectories. Initial conditions for the simulations were uniformly distributed within $||\x_0||\leq 20$, $||\xhat_0||\leq 1$ and were constant across the three controller simulations. The environmental noise distribution for the simulation was $\mathcal{N}(\zero,\I)$ and was also constant across simulations. The increased variability of the noise and initial conditions in the test is used to reflect performance outside of the training data set. The performance of the learned controllers was then compared to that of the expert through the cost function $\frac{1}{N}\sum_{k=0}^N||\x_e(k)-\x(k)||^2_2$ for each trajectory, where $x_e(k)$ is the state at time step $k$ given the expert control action, and $x(k)$ is the state given the learned controller action. \autoref{fig:Ethanplot} shows the relative cost for each controller as a function of the number of training data trajectories. The QSR-constrained controller significantly outperformed the unconstrained controller in very low-data settings, and provided comparable performance in higher-data settings. This is because the unconstrained controller demonstrated unstable behavior in many low-data simulations, while the QSR-constrained controller was stable for all.

    \begin{figure}
	    \centering
	    \resizebox{.8\columnwidth}{!}{%
	    	\subimport{tikz_images/}{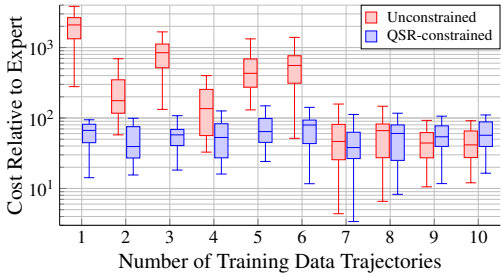}
	    }
	    \caption{Performance of QSR-constrained behavior cloning controller compared to the unconstrained controller as applied to Example 2.}
	    \label{fig:Ethanplot}
    \end{figure}

	\section{Discussion} \label{sec:discussion}
	
	In this work, it is shown that dissipativity is a valuable tool for recovering robust stability guarantees for learned controllers when plant models are low-fidelity. The behavior cloning control problem is reformulated for observer-based dynamic output feedback control, and convex constraints are formulated to enforce QSR-dissipative properties on the learned controller. IO stability theory is then employed to guarantee stability of the true system using coarse open-loop knowledge of the plant subsystems and their interconnections. Experimental results show that this approach yields controllers that are capable of imitating expert behavior despite limited information about the system and its true states. Moreover, the constrained learned controllers maintain stability despite high uncertainty and low available data, even in cases where comparable methods fail due to over-reliance on unreliable plant models. 
	Future work will improve on this scheme by designing the observer and feedback together, relaxing constraints on the possible QSR-dissipative properties that can be imposed, and applying dissipative constraints on different forms of learned controllers, like neural networks.

	\addtolength{\textheight}{0cm}  
	\bibliographystyle{IEEEtran}
	\bibliography{IEEEabrv,CCTA2022Citations}
	
\end{document}